\renewcommand{\d}{\text{d}}
\newcommand{\derD}[1]{\overrightarrow{#1}}
\newcommand{\derG}[1]{\overleftarrow{#1}}
\newtheorem{lemm}{Lemma}
\newtheorem{thm}[lemm]{Theorem}
\newtheorem{coro}[lemm]{Corollary}
\tikzstyle{every picture}=[level distance = 8mm, baseline=-0.5ex] 
\tikzstyle{prop}=[shape=circle,minimum size=6mm, draw=black!80, fill=green!30]
\tikzstyle{boucle}=[shape=circle,minimum size=6mm, draw=black!80]
\tikzstyle{ligne}=[shape=line, minimum size=6mm, draw=black!80]
\def\clap#1{\hbox to 0pt{\hss#1\hss}}
\newlength{\hcenteredinphantomlength}
\newlength{\hcenteredinphantomheight}
\newlength{\progressbarlength}
\newlength{\progressbarremaininglength}
\newlength{\progressbarshiftlength}
\newlength{\progressbardonethickness}
\newlength{\progressbartodothickness}
\title{Batalin--Vilkovisky formalism as a theory of integration for polyvectors}
\author{Pierre J. Clavier${}^{1}$, Viet Dang Nguyen${}^{2,3}$ \\
\\
\normalsize \it ${}^1$ Potsdam Universit\"at, Institute f\"ur Mathematik, Golm, Deutschland \\
\normalsize \it $^2$ Universit\'e Claude Bernard Lyon 1, UMR 5208, Institut Camille Jordan, 69622, Lyon, France\\
\normalsize \it $^3$ CNRS, UMR 5208, Institut Camille Jordan, 69622, Lyon, France\\
}
\date{}
\begin{document}

\maketitle

\begin{abstract}
 The Batalin--Vilkovisky (BV) formalism is a powerful generalization of the
BRST approach of gauge theories and allows to treat more general field theories.
We will see how, starting from the case of a finite dimensional
configuration space, we can see this formalism as a theory of integration
for polyvectors over the shifted cotangent bundle of the configuration
space, and arrive at a formula that admits a generalization to the infinite
dimensional case. The process of gauge fixing and the observables of the
theory will be presented.
\end{abstract}

\section{Motivations and program}

If you ask your best experimental physicist friend what the Universe is, there is a good chance that he will talk about particles; 
gluons and quarks, photons and fermions if he is studying the small structures of matter at high energies. On the other hand, if he is doing 
experimental astrophysics, he might discuss  stars, black holes and so on.

These objects are described by very different (and, to date, incompatible) theories: quantum field theory for the former, general relativity 
for the latter. However, quantum field theory and general relativity have a point in common: they happen to be gauge theories. The goal of 
the formalism devised by Batalin and Vilkovisky in \cite{BaVi77} and \cite{BaVi81b} is to deal with such theories and some of their generalizations. 
For the sake of completeness let us briefly, in non-technical terms, recall what a gauge theory is.

Let us assume that you have a $d$-dimensional space-time $\mathcal{M}^d$. A field living on that space is a function from this space-time to 
a target space which depends on the theory under consideration: it is $\mathbb{R}$ for a scalar field theory, a vector space for a vector 
field theory, and so on. The guiding principle of a gauge theory is to reparametrize your field by a ``rotation''
\begin{equation*}
 \Psi(x)\longrightarrow e^{i\theta}\Psi(x).
\end{equation*}
This is rather similar to the invariance under rotation of the wavefunctions solutions of the Schr\"odinger equation in 
usual quantum mechanics. This reparametrization is, however, a generalisation of quantum mechanics in two important ways.
\begin{itemize}
 \item The reparametrization can be made into a group more general than the group $U(1)$ of rotations. We will typically say that the reparametrization
 parameter $\theta$ is an element of some semi-simple Lie algebra $\mathfrak{g}$. This algebra will be called the gauge algebra.
 \item This $\theta\in\mathfrak{g}$ has a value that depends on the point of space-time at which we evaluate the field $\Psi$: 
 $\theta=\theta(x)$.
\end{itemize}
We call gauge theory a theory that has the invariance under such a reparametrization, plus the usual properties of any nice physical theory: 
Lorentz invariance, locality, renormalizability (in the case of quantum field theories).

Now, one of the arguably most elegant formulations of physics is known as the path-integral formalism. It was devised by Feynman in \cite{Fe48}. 
It rests upon the observation that the basic principles of quantum mechanics forbid us to determine by what slit a photon goes in the 
double slots experiment, or more precisely, it states that this is a meaningless question. Then, one can increase the number of screens with two slots 
on them between the photon's emission and detection points. For each of the screens  
there is no meaning to ask to which slot the photon went through. Therefore, in the limit of an infinite number of screens, we conclude 
that we cannot tell which \emph{path} a photon follows from $a$ to $b$: one has to make a (weighted) 
average over all possible paths, that is to perform an integration over the space of paths.

Two obvious difficulties arise. The first one is that the space of paths is huge, typically having an uncountable 
number of dimensions. Integrals are in general ill-defined in such spaces. The second one comes from the gauge freedom: if two paths 
can be mapped to each other by a gauge transformation, they correspond to the same physical path, and we shall not count both. 

Let us briefly explain the program behind the BV formalism. First, we take a finite dimensional configuration space $M$, $N=\dim(M)$. Then, 
naively, we can imagine our observables as elements of $\mathcal{C}^{\infty}(M)$. Path integral has to be an evaluation map
\begin{equation*}
 <>:\mathcal{C}^{\infty}(M)\longrightarrow\mathbb{R}.
\end{equation*}
The most natural way to do this is to choose a volume form $\Omega\in\Lambda^N T^*M$ and to define, for an observable $f\in\mathcal{C}^{\infty}(M)$
\begin{equation*}
 <f> = \int_Mf\Omega.
\end{equation*}
This has two assets. First, the integration of forms is a powerful, very well understood tool. Second, it provides an equivalence 
relation between elements of $\mathcal{C}^{\infty}(M)$ which have the same evaluation: if $f\Omega$ and $f'\Omega$ differ by a $d$-exact term (where $d$ 
is the de Rham differential), then they 
give rise to the same measured values. Hence we understand that the true observables are rather elements of the de Rham cohomology group. 
This remark gives some hope about our ability to treat gauge theories.

However, it also suffers from two serious drawbacks: the notion of a top form does not have a clear meaning in the infinite dimensional case, 
and even less so if the dimension is uncountably infinite, as it will be in the cases of interest. Moreover, $\Omega$ might not exist.

The BV formalism offers an escape road to these drawbacks. The idea is to work with polyvectors rather than with forms, since a $N-1$ 
form is the contraction of a $1$-polyvector with a $N$ form. We gain in that we can work with polyvectors even in the infinite dimensional case! 
The usual concepts of integration can be lifted to the level of polyvectors, and arrive at a formulation that admits a natural 
generalization to the infinite dimensional case. This project is summarized in figure \ref{fig1}.
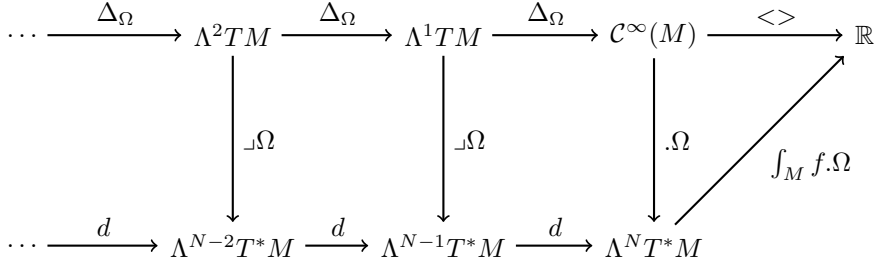
\begin{figure} \label{fig1} 
 \begin{tikzpicture}[->,shorten >=1pt,auto,node distance=2.8cm,thick] 

    \node (1) {$\mathbb{R}$};
    \node (2) [left of=1] {$\mathcal{C}^{\infty}(M)$};
    \node (3) [below  of=2] {$\Lambda^N T^*M$};
    \node (4) [left of=2] {$\Lambda^1 TM$};
    \node (5) [left of=4] {$\Lambda^2 TM$};
    \node (6) [left of=5] {$\ldots$};
    \node (7) [below of=4] {$\Lambda^{N-1} T^*M$};
    \node (8) [below of=5] {$\Lambda^{N-2} T^*M$};
    \node (9) [below of=6] {$\ldots$};

    \path
      (6) edge node [above] {$\Delta_{\Omega}$} (5)
      (5) edge node [above] {$\Delta_{\Omega}$} (4)
      (4) edge node [above] {$\Delta_{\Omega}$} (2)
      (2) edge node [above] {$<>$} (1)
      (2) edge node [right] {$.\Omega$} (3)
      (3) edge node [below right] {$\int_Mf.\Omega$} (1);

    \path
      (9) edge node [above] {$d$} (8)
      (8) edge node [above] {$d$} (7)
      (7) edge node [above] {$d$} (3);

    \path 
      (4) edge node [right] {$\lrcorner\Omega$} (7)
      (5) edge node [right] {$\lrcorner\Omega$} (8);

 \end{tikzpicture}
 \caption{a summary of the BV program.}
\end{figure}
The plan of this paper is the following: in the subsequent second section, we will define the BV integral and state some of its 
properties\footnote{in most of the case, only hints of proofs will be given, first to keep this text within a reasonable 
size, and second to clarify the idea of the construction. In practice, this will make the next section quite short}. The third 
section will make explicit how to deal with the gauge freedom within the BV formalism, and will introduce the 
Schouten--Nijenhuis brackets when working in simple (but still interesting) gauge theories. The fourth section will present the 
famous master equations, and give a few consistency checks on the set of observables built within the BV framework. The conclusion contains
a discussion on the reasons why one might be interested in studying the BV formalism.

\section{BV integral}

From now on we will use the so-called shift isomorphism.
Let $M$ be a finite dimensional vector space. The space of polyvectors is isomorphic (as a vector 
space) to the space of functions on the shifted cotangent space:
 \begin{equation}
  \Lambda TM \simeq \mathcal{C}^{\infty}(\Pi T^*M).
 \end{equation}
 We denote by $\Pi M$ the vector space $M$ shifted by $1$. Here shifted means that we reverse the parity of coordinates: fermionic coordinates are now 
 bosonic and vice-versa. Let $(e^a)_a$ be a basis of $M$. Then the $e^a$ are linear coordinates on $M^*$ and the isomorphism is given in term 
 of the basis by 
\begin{eqnarray*}
 e^{a_1}\wedge\dots \wedge e^{a_k} \mapsto \left(e^{a_1}\dots e^{e_k}\right), 
\end{eqnarray*}
(extended by linearity). The left-hand-side of the above map is a polyvector, while the right-hand-side is a polynomial. The basic idea 
behind the proof is that both sides have the same parity (that is why we have $\Pi T^*M$) and that all functions on a shifted space (which we 
call superfunctions) are polynomials. Hence, in the following, we will freely use this isomorphism, and in particular we will often write 
polyvectors as superfunctions.

 \subsection{BV laplacian}
 
In Figure \ref{fig1} we were imprecise when writing a polyvector field as an element of $\Lambda^pTM$, while 
a polyvector is instead a section of the bundle $\Lambda^pTM$ above $M$, exactly like a form is a section of the cotangent bundle 
$\Lambda^qT^*M$. This distinction is often neglected in Physics, for going from one to the other boils down to specifying a basis on 
$M$. However, we have $\alpha\in \Gamma(M,\Lambda^p(TM))$. Then the contraction operator $\lrcorner$ is a map 
\begin{equation}
 \lrcorner: \Gamma(M,\Lambda^p(TM))\times\Gamma(M,\Lambda^q(T^*M))\to\Gamma(M,\Lambda^{q-p}(T^*M))
\end{equation}
defined for $p=1$ by $(X\lrcorner\omega)(X_1,\dots,X_{q-1}):=(\iota_X\omega)(X_1,\dots,X_{q-1})=\omega(X,X_1,\dots,X_{q-1})$ and is then extended by 
$\left(\alpha\wedge\beta\right)\lrcorner\omega=\alpha\lrcorner(\beta\lrcorner\omega)$.

Now, let $\alpha$ be a polyvector\footnote{a polyvector field, to be precise but in the following we will not bother writing field everywhere} 
and $\Omega$ a well-behaved (i.e., nowhere vanishing) volume form. Then $\Delta_{\Omega}$ is the operator from 
$\Gamma(M,\Lambda^p(TM))$ to $\Gamma(M,\Lambda^{p-1}(TM)$ defined by
\begin{equation} \label{def_delta}
 (\Delta_{\Omega}\alpha)\lrcorner\Omega=d(\alpha\lrcorner\Omega).
\end{equation}
If we denote by $\mathcal{F}_{\Omega}$ the isomorphism
\begin{equation}
 \mathcal{F}_{\Omega}:\alpha \in \Gamma(M,\Lambda TM) \longmapsto \alpha\lrcorner \Omega \in \Gamma(M,\Lambda T^*M) 
\end{equation}
then $\mathcal{F}_{\Omega}^{-1}$ is well-defined since $\Omega$ was assumed to vanish nowhere. Thus
\begin{equation} \label{BV_lapl_def}
 \Delta_{\Omega} = \mathcal{F}_{\Omega}^{-1}\circ d\circ \mathcal{F}_{\Omega}
\end{equation}
where $d$ is the usual de Rham differential. The homology of this operator is described in the subsequent lemma.
\begin{lemm}
Let $\Omega$ be a volume form and $\Delta_{\Omega}$ be its associated BV laplacian. Then the operator $\Delta_{\Omega}$ defined above 
satisfies $\Delta_{\Omega}^2=0$.
\end{lemm}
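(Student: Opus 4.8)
The plan is to reduce everything to the single fact that the de Rham differential is a differential, $d^2=0$, and to exploit the conjugation formula \eqref{BV_lapl_def}. Since $\Omega$ is nowhere vanishing, $\mathcal{F}_{\Omega}$ is a genuine isomorphism of graded modules $\Gamma(M,\Lambda TM)\xrightarrow{\ \sim\ }\Gamma(M,\Lambda T^*M)$, so $\mathcal{F}_{\Omega}\circ\mathcal{F}_{\Omega}^{-1}=\mathrm{id}$ on all of $\Gamma(M,\Lambda T^*M)$; in particular $\mathcal{F}_{\Omega}^{-1}$ may be applied to an arbitrary form, with no ``is it in the image'' caveat. First I would record the degree bookkeeping: on $\Gamma(M,\Lambda^pTM)$, the map $\mathcal{F}_{\Omega}$ lands in $\Gamma(M,\Lambda^{N-p}T^*M)$, then $d$ lands in $\Gamma(M,\Lambda^{N-p+1}T^*M)=\Gamma(M,\Lambda^{N-(p-1)}T^*M)$, so $\mathcal{F}_{\Omega}^{-1}$ brings us back to $\Gamma(M,\Lambda^{p-1}TM)$, consistent with the claim that $\Delta_{\Omega}$ lowers polyvector degree by one; hence $\Delta_{\Omega}^2$ lowers it by two, and it makes sense to ask whether it vanishes.

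The computation is then immediate: using \eqref{BV_lapl_def},
\begin{equation*}
 \Delta_{\Omega}^2=\left(\mathcal{F}_{\Omega}^{-1}\circ d\circ\mathcal{F}_{\Omega}\right)\circ\left(\mathcal{F}_{\Omega}^{-1}\circ d\circ\mathcal{F}_{\Omega}\right)=\mathcal{F}_{\Omega}^{-1}\circ d\circ\left(\mathcal{F}_{\Omega}\circ\mathcal{F}_{\Omega}^{-1}\right)\circ d\circ\mathcal{F}_{\Omega}=\mathcal{F}_{\Omega}^{-1}\circ d^2\circ\mathcal{F}_{\Omega}=0,
\end{equation*}
the last equality by $d^2=0$. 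Alternatively, one can argue directly from the defining relation \eqref{def_delta} without ever invoking \eqref{BV_lapl_def}: for a polyvector $\alpha$ one has $(\Delta_{\Omega}^2\alpha)\lrcorner\Omega=d\big((\Delta_{\Omega}\alpha)\lrcorner\Omega\big)=d\big(d(\alpha\lrcorner\Omega)\big)=0$, and since contraction with the nowhere vanishing $\Omega$ is injective (indeed bijective onto forms), this forces $\Delta_{\Omega}^2\alpha=0$. This second route has the mild advantage of making transparent where the nonvanishing hypothesis on $\Omega$ is used, namely solely to guarantee injectivity of $\alpha\mapsto\alpha\lrcorner\Omega$.

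There is no real obstacle here; the statement is essentially a formal consequence of the construction. The only points deserving a line of care are (i) the well-definedness and invertibility of $\mathcal{F}_{\Omega}$, which is exactly the nowhere-vanishing assumption on $\Omega$ and was already noted in the text, and (ii) the degree matching above, which ensures the compositions are legitimate. Everything else is a two-line manipulation resting on $d^2=0$.
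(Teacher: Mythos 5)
Your argument is correct and is exactly the route the paper intends: the text introduces $\Delta_{\Omega}=\mathcal{F}_{\Omega}^{-1}\circ d\circ\mathcal{F}_{\Omega}$ precisely so that $\Delta_{\Omega}^2=\mathcal{F}_{\Omega}^{-1}\circ d^2\circ\mathcal{F}_{\Omega}=0$ follows from $d^2=0$, with the nowhere-vanishing hypothesis on $\Omega$ guaranteeing that $\mathcal{F}_{\Omega}$ is invertible. Your alternative via the defining relation \eqref{def_delta} is just a rephrasing of the same argument, so nothing further is needed.
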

In the remaining parts of this text, we will use the usual terminology and call a polyvector $\alpha$ to be $\Delta_{\Omega}$-closed when 
$\Delta_{\Omega}\alpha=0$ and $\Delta_{\Omega}$-exact when there exists a polyvector $\beta$ such that $\alpha=\Delta_{\Omega}\beta$.

While $\Delta_{\Omega}$ depends on the chosen volume form, we will show later that the observables, built as elements of the 
homology complex of $\Delta_{\Omega}$ do not. Moreover one can relate the various BV laplacians to each others, as we will see with \eqref{relation_delta}.

 \subsection{Definition of the integral}
 
Given a well-behaved volume form $\Omega$ and its associated BV laplacian $\Delta_{\Omega}$ we define the BV integral of a polyvector 
field $\alpha$ as
\begin{equation*}
 \int_{\Pi N^*\Sigma}^{BV} \alpha = \int_\Sigma \alpha\lrcorner \Omega
\end{equation*}
where $N^*\Sigma$ is the conormal of $\Sigma$, which means the part of the cotangent space $T^*M$ which vanishes on the tangent of $\Sigma$. 
It is a space of dimension $N=\dim(M)$. Indeed, if $\dim(\Sigma)=p$, then $N^*\Sigma$ has $N-p$ dimensions in the fiber.
This definition can be justified in the finite dimensional case by working out its right-hand-side with the shift isomorphism until we 
reach the left-hand-side. To make the link with the language of symplectic geometry, let us notice that conormal spaces are examples of 
Lagrangian submanifolds of $T^*M$ endowed with its natural symplectic form $dx^i\wedge dx^*_i$.

With this definition of the BV integral at hand, most of the important theorems of integration of forms can be translated as results on the integration 
of polyvectors. The first is the following counterpart to the Stokes theorem
\begin{thm} \label{thmStokes}
Let $\Omega$ be a volume form and $\Delta_{\Omega}$ be its associated BV laplacian. Let $\Sigma$ be a smooth submanifold with smooth 
boundary $\partial \Sigma$. Then for any polyvector $\alpha\in C^\infty(\Pi T^*M)$:
$$\int^{BV}_{\Pi N^*\partial\Sigma}\alpha=\int^{BV}_{\Pi N^*\Sigma}\Delta_{\Omega}\alpha.$$
\end{thm}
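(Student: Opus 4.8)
The plan is to reduce this BV Stokes theorem to the ordinary Stokes theorem via the definition of the BV integral and the defining property \eqref{def_delta} of the BV Laplacian. First I would unfold both sides using the definition $\int^{BV}_{\Pi N^*\Sigma}\beta=\int_\Sigma\beta\lrcorner\Omega$. The right-hand side becomes $\int_\Sigma(\Delta_{\Omega}\alpha)\lrcorner\Omega$, and by \eqref{def_delta} this equals $\int_\Sigma d(\alpha\lrcorner\Omega)$. The form $\alpha\lrcorner\Omega$ is a $(N-p)$-form on $M$ if $\alpha$ has top polyvector degree $p$ (in general a sum of such forms over the various homogeneous components of $\alpha$), so $d(\alpha\lrcorner\Omega)$ restricted to the $p$-dimensional submanifold $\Sigma$ picks out exactly the component that is a $p$-form on $\Sigma$. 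The ordinary Stokes theorem then gives $\int_\Sigma d(\alpha\lrcorner\Omega)=\int_{\partial\Sigma}\alpha\lrcorner\Omega$, and the right-hand side here is by definition $\int^{BV}_{\Pi N^*\partial\Sigma}\alpha$. So the core of the argument is the chain
\begin{equation*}
\int^{BV}_{\Pi N^*\Sigma}\Delta_{\Omega}\alpha=\int_\Sigma(\Delta_{\Omega}\alpha)\lrcorner\Omega=\int_\Sigma d(\alpha\lrcorner\Omega)=\int_{\partial\Sigma}\alpha\lrcorner\Omega=\int^{BV}_{\Pi N^*\partial\Sigma}\alpha.
\end{equation*}

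The point that needs genuine care — and which I expect to be the main obstacle — is matching the \emph{degrees} and the conormal directions correctly, i.e.\ checking that $\int_\Sigma\beta\lrcorner\Omega$ is really the evaluation of $\beta$ against $\Pi N^*\Sigma$ in the sense that makes both sides of the theorem refer to the same geometric data. One must verify that when $\alpha$ is homogeneous of polyvector degree $p$, then $\alpha\lrcorner\Omega\in\Gamma(M,\Lambda^{N-p}T^*M)$, so its pullback to the $p$-dimensional $\Sigma$ is a top form there and the integral $\int_\Sigma\alpha\lrcorner\Omega$ makes sense; correspondingly $\Delta_{\Omega}\alpha$ has degree $p-1$, so $(\Delta_{\Omega}\alpha)\lrcorner\Omega$ is an $(N-p+1)$-form, whose pullback to $\partial\Sigma$ (dimension $p-1$) is again a top form. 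Thus the boundary term on the left and the $\Delta_\Omega$ term on the right live on spaces of matching dimension, and the indexing $\Pi N^*\Sigma$ versus $\Pi N^*\partial\Sigma$ is consistent with the fiber-dimension bookkeeping $N-p$ versus $N-(p-1)$ recalled just before the theorem. For inhomogeneous $\alpha$ one simply splits into homogeneous components and uses linearity of $\lrcorner$, $d$, $\Delta_\Omega$ and of the integral.

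A secondary subtlety is the interplay between $d$ and restriction to a submanifold: strictly, Stokes' theorem is applied to the pullback $\iota_\Sigma^*(\alpha\lrcorner\Omega)$, using $\iota_\Sigma^*d=d\,\iota_\Sigma^*$, so that $\int_\Sigma d(\alpha\lrcorner\Omega):=\int_\Sigma \iota_\Sigma^*d(\alpha\lrcorner\Omega)=\int_\Sigma d\,\iota_\Sigma^*(\alpha\lrcorner\Omega)=\int_{\partial\Sigma}\iota_{\partial\Sigma}^*(\alpha\lrcorner\Omega)$. This is where smoothness of $\Sigma$ and of $\partial\Sigma$ is used, together with the standard orientation convention on $\partial\Sigma$; no compactness issue arises if one assumes, as is implicit, that $\alpha$ is such that the relevant forms are compactly supported on $\Sigma$ (or that $\Sigma$ is compact). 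Once these identifications are in place the theorem is immediate, so the write-up is mostly a matter of stating the degree/conormal dictionary cleanly and then invoking \eqref{def_delta} and classical Stokes.
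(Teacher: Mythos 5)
Your proposal is correct and follows essentially the same route as the paper: unfold both sides via the definition of the BV integral, use the defining relation \eqref{def_delta} to replace $(\Delta_{\Omega}\alpha)\lrcorner\Omega$ by $d(\alpha\lrcorner\Omega)$, and conclude by the ordinary Stokes theorem. The extra remarks on degree bookkeeping, pullbacks and compact support are sensible refinements of the same argument rather than a different approach.
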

\begin{proof}
 \begin{align*}
  \int^{BV}_{\Pi N^*\partial\Sigma}\alpha & = \int_{\partial\Sigma}\alpha\lrcorner\Omega \quad \text{by definition of the BV integral,} \\
					  & = \int_{\Sigma}d(\alpha\lrcorner\Omega) \quad \text{by the usual Stokes theorem, and with $d$ the de Rham differential,} \\
					  & = \int_{\Sigma}(\Delta_{\Omega}\alpha)\lrcorner\Omega \quad \text{by definition of $\Delta_{\Omega}$,} \\
					  & = \int^{BV}_{\Pi N^*\Sigma}\Delta_{\Omega}\alpha.\quad \text{by definition of the BV integral.}
 \end{align*}

\end{proof}
This result has an obvious corollary 
\begin{coro} \label{gauge_fix}
 Let $\Sigma_1$, $\Sigma_2$ be two smooth submanifolds belonging to the same homology class. Then for any $\Delta_{\Omega}$-closed 
 polyvector $\alpha\in C^\infty(\Pi T^*M)$:
\begin{equation*}
 \int^{BV}_{\Pi N^*\Sigma_1}\alpha = \int^{BV}_{\Pi N^*\Sigma_2}\alpha.
\end{equation*}
\end{coro}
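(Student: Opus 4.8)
The plan is to realise the homology between $\Sigma_1$ and $\Sigma_2$ by a cobounding piece and feed it into Theorem~\ref{thmStokes}. Since $\Sigma_1$ and $\Sigma_2$ lie in the same homology class they have the same dimension $p$, and in the favourable situation there is a smooth submanifold $W$ with smooth boundary $\partial W=\Sigma_1\sqcup\overline{\Sigma_2}$, where $\overline{\Sigma_2}$ is $\Sigma_2$ with reversed orientation (so that $[\partial W]=[\Sigma_1]-[\Sigma_2]$). Its conormal $\Pi N^*W$ is again a space of dimension $N$, so $\int^{BV}_{\Pi N^*W}$ makes sense, and from the very definition $\int^{BV}_{\Pi N^*\Sigma}\beta=\int_\Sigma\beta\lrcorner\Omega$ one reads off that the BV integral is additive over disjoint unions and changes sign under orientation reversal. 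Hence
\begin{equation*}
 \int^{BV}_{\Pi N^*\partial W}\alpha=\int^{BV}_{\Pi N^*\Sigma_1}\alpha-\int^{BV}_{\Pi N^*\Sigma_2}\alpha .
\end{equation*}

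Next I would apply Theorem~\ref{thmStokes} with $\Sigma=W$, which rewrites the left-hand side as $\int^{BV}_{\Pi N^*W}\Delta_{\Omega}\alpha$. Since $\alpha$ is assumed $\Delta_{\Omega}$-closed we have $\Delta_{\Omega}\alpha=0$, so this integral vanishes and the claimed equality follows. Unwinding through $\mathcal{F}_{\Omega}$, this is just the classical fact that the form $\alpha\lrcorner\Omega$ — which is closed because $d(\alpha\lrcorner\Omega)=(\Delta_{\Omega}\alpha)\lrcorner\Omega=0$ by \eqref{def_delta} — has the same period over the homologous cycles $\Sigma_1$ and $\Sigma_2$, now read off degree by degree in the polyvector grading of $\alpha$ (the $p$-form component of $\alpha\lrcorner\Omega$ coming from the $(N-p)$-polyvector part of $\alpha$ being the only one that contributes to integration over a $p$-dimensional cycle).

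The one genuine obstacle — and the reason the corollary is worth a word rather than being purely formal — is the passage from ``homologous'' to ``cobounded by a smooth $W$'': not every homology class of $M$ is carried by an embedded submanifold, and a homology between two submanifolds need not be realised by a submanifold with boundary. The clean way around this is to work with (smooth, or smoothable) singular chains: write $\Sigma_1-\Sigma_2=\partial W$ for a singular $(p+1)$-chain $W$, note that de~Rham's Stokes theorem — hence the proof of Theorem~\ref{thmStokes} — applies verbatim to such chains with the conormal construction performed simplex by simplex, and run the argument above. Everything remaining (orientations, the sign attached to each boundary component, additivity of the integral) is routine bookkeeping that the Stokes theorem has already reduced to the classical setting.
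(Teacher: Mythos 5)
Your proof is correct and follows essentially the same route the paper intends: unwind the definition so that $\Delta_{\Omega}$-closedness of $\alpha$ becomes closedness of $\alpha\lrcorner\Omega$, and invoke Stokes (in its BV guise, Theorem~\ref{thmStokes}) over a chain cobounding $\Sigma_1-\Sigma_2$. Your extra care about realising the homology by singular chains rather than an embedded smooth $W$ is a legitimate refinement of a point the paper simply glosses over.
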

Finally, there is also a very simple but important lemma coming from the previous definitions.
\begin{lemm} \label{lemm_def_obs}
 If a polyvector $\alpha$ is $\Delta_{\Omega}$-exact, then 
\begin{equation*}
 \int_{\Pi N^*\Sigma}^{BV} \alpha = 0
\end{equation*}
\end{lemm}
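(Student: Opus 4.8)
The plan is to deduce this directly from Theorem~\ref{thmStokes}, or more precisely from the definitions that underlie it, without invoking Stokes at all. Suppose $\alpha$ is $\Delta_\Omega$-exact, so there is a polyvector $\beta \in C^\infty(\Pi T^*M)$ with $\alpha = \Delta_\Omega \beta$. By the definition \eqref{def_delta} of the BV laplacian, $(\Delta_\Omega \beta)\lrcorner\Omega = d(\beta\lrcorner\Omega)$, hence $\alpha\lrcorner\Omega = d(\beta\lrcorner\Omega)$ is a de Rham exact form on $M$.

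Next I would unwind the definition of the BV integral:
\begin{equation*}
 \int_{\Pi N^*\Sigma}^{BV}\alpha = \int_\Sigma \alpha\lrcorner\Omega = \int_\Sigma d(\beta\lrcorner\Omega).
\end{equation*}
Now I would apply the ordinary Stokes theorem to the $(N-1)$-form $\beta\lrcorner\Omega$ on the $p$-dimensional submanifold $\Sigma$, giving $\int_\Sigma d(\beta\lrcorner\Omega) = \int_{\partial\Sigma}\beta\lrcorner\Omega$. If $\Sigma$ is taken without boundary (a closed submanifold, which is the situation in which the BV integral over $\Pi N^*\Sigma$ is typically considered), this vanishes and we are done. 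Alternatively, and perhaps more cleanly, one observes that $\alpha\lrcorner\Omega$ being $d$-exact means its class in de Rham cohomology is zero, so its integral over any cycle $\Sigma$ vanishes; this is really just Corollary~\ref{gauge_fix} applied with $\Sigma_2 = \emptyset$ (or with $\Sigma$ homologous to a point), since a $\Delta_\Omega$-exact polyvector is in particular $\Delta_\Omega$-closed by the Lemma $\Delta_\Omega^2 = 0$.

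The only real subtlety — and the step I would be most careful about — is the status of $\Sigma$: the statement as phrased says merely "a polyvector $\alpha$ is $\Delta_\Omega$-exact" and refers to $\Pi N^*\Sigma$ without hypotheses on $\Sigma$. The clean vanishing requires either that $\Sigma$ is closed ($\partial\Sigma = \emptyset$) or, more generally, that $\beta\lrcorner\Omega$ restricts to zero on $\partial\Sigma$, or that $\Sigma$ is a cycle so that the argument goes through de Rham cohomology. I would simply state the result for closed $\Sigma$ (consistent with the conormal/Lagrangian picture of the previous paragraph) and give the two-line computation $\int_{\Pi N^*\Sigma}^{BV}\Delta_\Omega\beta = \int_\Sigma d(\beta\lrcorner\Omega) = \int_{\partial\Sigma}\beta\lrcorner\Omega = 0$, noting that no genuinely new idea beyond Stokes and the definition of $\Delta_\Omega$ is needed.
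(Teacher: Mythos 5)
Your proof is correct and is essentially the argument the paper has in mind: it leaves the lemma as an immediate consequence of unwinding the definition of the BV integral together with the ordinary Stokes theorem, exactly as in the proof of Theorem~\ref{thmStokes}. Your explicit remark that the vanishing requires $\partial\Sigma=\emptyset$ (or $\Sigma$ a cycle) is a reasonable clarification of a hypothesis the paper leaves implicit in its Lagrangian/conormal setting, but it does not change the route.
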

This lemma is the equivalent to the Poincaré lemma for forms.

As for the Theorem \ref{thmStokes}, the proofs of this corollary and of this lemma are fairly easy and follow simply from playing with the definition of the 
BV integral and the usual integration of forms.

\subsection{Advantages of the BV formalism}

Having defined the BV integral, we may  start answering a simple question: what are the advantages of the BV formalism over other
approaches such as the Faddeev--Popov determinant or the BRST formalism. We will give the historical answer here. Other arguments 
in favor of the BV formalism will be given in the conclusion.

A symmetry is said to be open when it is fulfilled only on-shell, that is on the critical domain of the action $S_0$, i.e. on the  
submanifold of the configuration space where the fields are solutions to the usual equations of motion. The archetypal example 
of a physical theory with open symmetries is supergravity without auxiliary fields. As first noticed in 
\cite{Ka78}, when working in a theory with open symmetries we might end up with quartic ghost terms in the gauge-fixed lagrangian. 

In the Faddeev--Popov formalism, ghosts are interpreted as fermionic variables coming from the restriction of the domain of 
integration. This restriction is performed with delta functions, and brings a determinant, written as an integral over fermionic
variables: the ghosts. Therefore we do not have many freedom on the ghost terms that can be treated in the Faddeev--Popov formalism. 
In particular, quartic terms are not allowed, thus the Faddeev--Popov formalism is not adapted to the treatment of theories with 
open symmetries.

On the other hand, once again in the case of a theory with open symmetries, one can 
show that the BRST differential does not square to zero off-shell, hence making the BRST cohomology ill-defined. This is clearly 
explained in \cite{Weinberg96b}, where more references on the subject can be found. 
However, in sharp 
constrast with the Faddeev--Popov and BRST formalisms, since the BV formalism is a theory of integration, one can treat much more 
general functions than in the Faddeev--Popov formalism, and the well-definiteness of the integral does not depend on the precise form 
of these functions. Therefore, this formalism is particularly adapted to the treatment of theories with open symmetries.

Another reason to prefer the BV formalism over other approaches is that some questions are easier to answer with it. In particular, Jean Zinn-Justin and Laurent le 
Guillou first developed an equivalent formalism (which corresponds to the classical version of the formalism presented here) to study 
anomalies in gauge theories. Nowadays, anomalies are still often studied within the framework of the BV formalism. 

Other arguments in favor of the BV formalism will be presented in the conclusion, for they involve contemporary concerns that 
can be treated with it.

\section{Gauge fixing}

 \subsection{Gauge fixing in BV formalism}

We are now ready to handle gauge freedom. We have two types of freedom: the choice of the volume form $\Omega$, and the choice of the 
surface $\Sigma$ to integrate over. Which one corresponds to the gauge freedom, and which one is a spurious choice, in which case we will make 
sure that have the evaluation of observables does not depend on it?

To answer this question, it is useful to realize that, in the finite dimensional case, we have
\begin{equation*}
 \int_{M}e^{iS/\hbar}\Omega = \int_{\underline{0}}^{BV}e^{iS/\hbar}.
\end{equation*}
Here $\underline{0}$ is the conormal of $M$, which has its $N$ dimensions on the base space $M$, and none on the fiber i.e. it 
is the zero section of the shifted cotangent bundle $\Pi T^*M$.

But $\underline{0}$ is a very bad choice of integration domain in $\Pi T^*M$! Indeed, in the infinite dimension limit, this integral will 
diverge, due to the gauge freedom. The reason for this is that $\underline{0}$ is identified (as a Lagrangian submanifold of 
$\Pi T^*M$) with $M$, and that the gauge group $\mathfrak{g}$ is a symmetry of the theory, the integrand is left 
unchanged under the action of the group $\mathfrak{g}$. Hence, in the infinite dimensional limit, we find an integral over a non-compact 
domain of a constant.

Fortunately, Corollary \ref{gauge_fix} tells us that if an integrand is $\Delta_{\Omega}$-closed, then one can change the domain of integration 
in $\Pi T^*M$ without changing the value of the integral. More precisely, if $\Sigma$ is a smooth submanifold of $M$ and if the 
integration domain in $\Pi T^*M$ is $\Pi N^*\Sigma$ (the conormal space of $\Sigma$), then without changing the value of the integral, we can 
integrate over $\Pi N^*\Sigma'$, provided that $\Sigma'$ is in the same homology class of $M$ than $\Sigma$. Hence, the choice of a gauge 
will be the choice of a surface $\Sigma$ to integrate over in $\Pi T^*M$. And, of course, saying that a quantity $A$ is gauge invariant amounts 
to requiring that such a change of the integration surface will not change the value of the integral. Hence, the 
$\Delta_{\Omega}$-closedness
condition is the gauge-invariance condition.

Now, a question a reader may raise is: to which extent can we explicitly describe the allowed integration surfaces? Homology is a notoriously  
difficult subject, so is this definition of any practical interest? As a partial answer, we will show that we can describe a non-trivial set of 
submanifolds on which it is legitimate to integrate. We read this observation in \cite{Fi06}, although the idea was already present in the 
seminal work of Batalin and Vilkovisky \cite{BaVi81b}. Take $\Psi_1$, $\Psi_2$ two smooth functions on $M$. 
Consider the submanifolds 
$\mathcal{L}_{\Psi_j}$ of $\Pi T^*M$ defined by
\begin{equation}
 (x^i,x^*_i)\in\mathcal{L}_{\Psi_j}\Longleftrightarrow \Psi_j(x^i)=0,x_i^* = \frac{\partial\Psi_j}{\partial x^i}
\end{equation}
for $j\in\{1,2\}$. Since we can build the homotopy
\begin{equation*}
 \Psi_t = t\Psi_1 +(1-t)\Psi_2,
\end{equation*}
we find that $\mathcal{L}_{\Psi_1}$ and $\mathcal{L}_{\Psi_2}$ are in the same homology class (if $M$ is connected). Finally, since the 
zero section is defined by $x_i^*=0$ we see that, for a smooth function $\Psi$, the submanifold $\mathcal{L}_{\Psi}$ defined in the same way 
as $\mathcal{L}_{\Psi_i}$ is an admissible submanifold to be 
integrated over: it has the same homology class as the zero section.

With this picture in mind, we see that Lemma \ref{lemm_def_obs} allows to say that two gauge invariant quantities that differ only by 
a $\Delta_{\Omega}$-exact term will give the same observed values, and thus actually lead to the same observables. In more rigorous terms: the 
observables will be elements of the homology group of $\Delta_{\Omega}$.

To summarize: the choice of the surface to integrate over is the choice of a gauge in the usual formulations of gauge theories, and the 
condition of gauge invariance of a quantity $A$ is translated into $\Delta_{\Omega}(A)=0$. Hence, we will have to check that the set of 
observables that we get at the end of this procedure does not depend on the choice of the volume form $\Omega$. This will be carried out at 
the very end of this presentation.

 \subsection{Schouten--Nijenhuis bracket}
 
 Before deriving the quantum master equations, let us present the so-called Schouten--Nijenhuis bracket for a special case on which we will 
 focus from now on. We will take our configuration space $M$ to be
 \begin{equation}
  M = X\times\Pi\mathfrak{g}
 \end{equation}
 where $X$ is a space of fields (which is assumed to be bosonic), and $\mathfrak{g}$ the ghost of the theory. We will write 
 $(x^i)=(\phi^i,c^{\alpha})$ for a coordinate basis of $M$. Then the associated basis of $\Pi T^*M$ is
 \begin{equation}
  (\phi^*_i,c^*_{\alpha},\phi^i,c^{\alpha}).
 \end{equation}
There is a natural grading on this space, defined by
\begin{equation}
 |c^*_{\alpha}|=-2 \quad |\phi^*_i|=-1 \quad |\phi^i|=0 \quad |c^{\alpha}|=+1
\end{equation}
and $|f.g|=|f|+|g|$. Hence antighosts $c^*_{\alpha}$ are bosonic, while the antifields $\phi^*_i$ and the ghosts are fermionic.

Let us notice that if we had ghosts of ghosts\footnote{such objects are needed when the Koszul homology is not a resolution, i.e. has 
non-trivial homology groups other than the zeroth one. This will make the zeroth BRST cohomology not isomorphic to the set of observables 
of the theory (see \cite{Fi06} for a clear proof of this statement), and has to be cured with the Tate procedure \cite{Ta56} which will 
produce ghosts of ghosts and so on. A presentation of BRST formalism coherent with the current notations can be found in \cite{Cl15}.} they 
would be of degree $+2$, ghosts of ghosts of ghosts would be of degree $+3$, and so on...

The Schouten--Nijenhuis bracket of two polyvectors $F$ and $G$, seen as superfunctions over $\Pi T^*M$ is defined as
\begin{equation}
 \{F,G\} = \frac{\derG\partial F}{\partial x^i}\frac{\derD\partial G}{\partial x^*_i} - \frac{\derG\partial F}{\partial x^*_i}\frac{\derD\partial G}{\partial x^i}
\end{equation}
where the right derivative $\derD\partial$ is the usual derivative and the left derivative $\derG\partial$ is defined by
\begin{equation}
 \frac{\derG\partial F}{\partial y} = (-1)^{|y|(|F|+1)}\frac{\partial F}{\partial y}.
\end{equation}
This bracket has many good properties (e.g. $\{F,.\}$ is a graded derivation, it obeys a graded Jacobi identity), but they are tedious to 
show (essentially because of the powers of $-1$) and left as an exercise. However, one of their important features is their link
to the BV laplacian when written in coordinates.

\section{Master equations}

 \subsection{BV laplacians in coordinate}
 
In order to derive the link discussed above, we take the configuration space of fields $X$ to be finite dimensional, with $\dim(X)=N$. Let $(\phi_i)_{i=1,...,N}$ be a basis of 
$X$ and $\Omega_0= \d \phi^1\wedge\ldots\wedge\d \phi^N$ the associated canonical volume form.
The BV laplacian associated to this volume form (which we will write $\Delta$ from now on) reads
\begin{equation}
 \Delta = \frac{\partial}{\partial \phi^i}\frac{\partial}{\partial \phi^*_i}.
\end{equation}
The proof of this result follows directly from the definition of the BV laplacian (given by \ref{def_delta}). Then, one can compute the 
right-hand-side, and the left-hand-side of \ref{def_delta} with the proposed $\Delta$. It is very easy to see that both sides lie in the 
same directions on $\Pi T^*X$, but the hard part is to check that they are actually equal.

We can then play the same game in the space $\Pi T^*\mathfrak{g}$, leading to the same result. In order to generalize this to
$M = X\times\Pi\mathfrak{g}$, we have to merge our two partial laplacians. It turns out that the right way of doing so is to define the 
BV laplacian $\Delta$ on $M$ to be
\begin{equation} \label{BV_lapl_coord}
 \Delta = \frac{\partial}{\partial\phi^i}\frac{\partial}{\partial\phi^*_i} - \frac{\partial}{\partial c^{\alpha}}\frac{\partial}{\partial c^*_{\alpha}}.
\end{equation}
Before going further, let us notice that we defined objects that admit a natural generalization to the infinite dimensional case! 
Indeed, one would just have to replace the derivatives in the Schouten--Nijenhuis bracket and in this BV laplacian by functional derivatives 
to have operators on the configuration space of a quantum field theory. We will nevertheless go on working in the finite dimensional setup to derive the 
quantum master equations, but we have to keep in mind that what we are doing has a meaning in the infinite dimensional case.

The laplacian \eqref{BV_lapl_coord} has many interesting properties. An essential property is that the Schouten--Nijenhuis bracket measures the 
obstruction preventing the BV laplacian $\Delta$ from being a (graded) derivation. More precisely
\begin{equation} \label{BV_prod}
 \Delta(fg) = (\Delta f)g + (-1)^{|f|}f(\Delta g) + (-1)^{|f|}\{f,g\}.
\end{equation}
This property is important for it can be used as the starting point of another approach of the BV formalism, developed in particular by 
Owen Gwilliam and Kevin Costello. The reader may want to look at the thesis \cite{Gw12} and the references therein for a detailed presentation of this 
approach.
\begin{lemm} (Gwilliam \cite{Gw12})
 $\Delta$ is the unique translation invariant second order differential operator that decreases the degree of the polyvectors it acts on by 
 exactly $1$ (for the grading defined above) and satisfies \eqref{BV_prod} for $\Pi T^*X$. 
\end{lemm}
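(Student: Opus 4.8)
The plan is to characterize $\Delta$ by peeling off the constraints one at a time and showing each one kills additional degrees of freedom until nothing is left but $\Delta$ itself. First I would write the most general translation-invariant second-order differential operator on $\mathcal{C}^\infty(\Pi T^*X)$, i.e. one with constant coefficients: $D = \sum a^{AB}\,\partial_A\partial_B + \sum b^A\,\partial_A + c$, where the indices $A,B$ range over the coordinates $(\phi^i,\phi^*_i)$ and the $\partial_A$ are left (or right, once a convention is fixed) derivatives, with the coefficients $a,b,c$ constants of definite parity. The requirement that $D$ lower the polyvector degree by \emph{exactly} $1$ immediately forces $c=0$, forces $b^A=0$ (a first-order term either preserves degree, if it is $\partial/\partial\phi^i$, or lowers it by $2$, if it is $\partial/\partial\phi^*_i$, or raises it), and among the second-order terms retains only those $\partial_A\partial_B$ whose net effect on the grading is $-1$: since $|\phi^i|=0$ and $|\phi^*_i|=-1$, the only surviving monomials are the mixed ones $\partial/\partial\phi^i\cdot\partial/\partial\phi^*_j$. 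So after this step $D = \sum_{i,j} a^{ij}\,\dfrac{\partial}{\partial\phi^i}\dfrac{\partial}{\partial\phi^*_j}$ for some constant matrix $a^{ij}$.

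Next I would impose \eqref{BV_prod}, that is, that the failure of $D$ to be a graded derivation is exactly the Schouten--Nijenhuis bracket $\{f,g\}$. The point is that for a constant-coefficient second-order operator the ``second-order defect'' $D(fg)-(Df)g-(-1)^{|f|}f(Dg)$ is automatically a biderivation in $f$ and $g$, and computing it for $D=\sum a^{ij}\,\partial_{\phi^i}\partial_{\phi^*_j}$ against the coordinate functions $f=\phi^k$, $g=\phi^*_\ell$ (and the other pairings) pins down the bilinear form against which it pairs. Matching this with the defining formula for $\{\cdot,\cdot\}$, in which the $\phi$ and $\phi^*$ variables are paired by the identity (the canonical symplectic pairing $dx^i\wedge dx^*_i$), forces $a^{ij}=\delta^{ij}$. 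This yields $D = \dfrac{\partial}{\partial\phi^i}\dfrac{\partial}{\partial\phi^*_i}=\Delta$, which is the claim. One should also note in passing that this $\Delta$ does satisfy \eqref{BV_prod} — that is the content of the earlier discussion — so the characterization is nonvacuous; uniqueness is what the lemma adds.

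The step I expect to be the main obstacle is the bookkeeping of signs in the second paragraph: because half the coordinates are fermionic, the ``second-order defect'' of $D$ is a \emph{graded} biderivation, and one has to be careful that the left/right-derivative conventions used in $D$ are compatible with those in the definition of $\{\cdot,\cdot\}$ (which already mixes $\derG\partial$ and $\derD\partial$). The cleanest route is probably to evaluate both the defect of $D$ and the bracket $\{f,g\}$ on the generating set of coordinate functions, where everything is low-degree and the signs are finite in number, and to invoke the fact that a graded biderivation is determined by its values on generators; this sidesteps having to track signs through an arbitrary pair of polynomials. A secondary, milder point is to make sure that translation invariance is being used in full strength — it is what forbids coefficients depending on $\phi$ or $\phi^*$, and hence what reduces the problem to linear algebra on the finite-dimensional space of constant-coefficient symbols — so I would state that reduction explicitly at the outset rather than leave it implicit.
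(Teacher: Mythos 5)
Your overall strategy --- write the most general operator allowed a priori, then let the degree condition, translation invariance and \eqref{BV_prod} fix its coefficients --- is the same as the paper's, which starts from the ansatz $a^i_{~j}(x)\,\partial_{x^i}\partial_{x^*_j} + b_i(x)\,\partial_{x^*_i} + c_{ij}^{~~k}(x)\,x^*_k\partial_{x^*_i}\partial_{x^*_j}$ and imposes \eqref{BV_prod}; and your way of pinning down $a^{ij}=\delta^{ij}$ by testing the second-order defect on coordinate functions is sound. However, there is a genuine error in your degree bookkeeping, and it sits exactly at the delicate point. The operator $\partial/\partial\phi^*_i$ removes one factor of $\phi^*$, so it lowers the polyvector degree by exactly $1$ (it shifts the paper's grading by $+1$, precisely as $\Delta$ does); it does not lower it by $2$ --- that is what $\partial_{\phi^*_i}\partial_{\phi^*_j}$ does. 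Hence the first-order term $b^i\,\partial/\partial\phi^*_i$ with constant $b^i$ is \emph{not} excluded by the degree requirement (note that the paper's own ansatz keeps $b_i(x)\,\partial/\partial x^*_i$ as a free parameter). Worse, it cannot be excluded afterwards by \eqref{BV_prod} either: a first-order operator annihilating constants is a graded derivation, so it contributes nothing to the defect $D(fg)-(Df)g-(-1)^{|f|}f(Dg)$, and $\Delta + b^i\partial_{\phi^*_i}$ satisfies every constraint you have imposed. To close the argument you must say explicitly how this term is ruled out: either read ``second order'' as \emph{purely} second order (no first-order part), which is the intended reading of the uniqueness statement, or add a normalization (e.g.\ that the operator annihilate the linear functions $\phi^*_i$, equivalently that its first-order part vanish). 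As written, your proof discards this term for a wrong reason, so the uniqueness claim is not actually established.

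A secondary, smaller point: by assuming constant coefficients in all variables you silently discard the term $c_{ij}^{~~k}\,x^*_k\partial_{\phi^*_i}\partial_{\phi^*_j}$, which does lower the polyvector degree by exactly $1$ and which the paper keeps in its ansatz. If ``translation invariant'' is taken to mean invariance under translations of $X$ only (the natural reading in the field-theoretic setting), you cannot drop it by fiat; it is killed by \eqref{BV_prod}, because its defect pairs $\partial_{\phi^*}f$ with $\partial_{\phi^*}g$, a pairing absent from the Schouten--Nijenhuis bracket. Adding that one computation, in the same evaluate-on-generators style you already use for $a^{ij}$, would align your argument with the paper's and make it robust to either reading of translation invariance.
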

The proof of this lemma relies on the study of the constraints set by \eqref{BV_prod} on the free parameters of the most general 
translation invariant second order differential operator that decreases the degree of the polyvectors it acts on of  exactly $1$, which is
\begin{equation*}
 \Delta = a^i_{~j}(x)\frac{\partial}{\partial x^i}\frac{\partial}{\partial x^*_j} + b_i(x)\frac{\partial}{\partial x^*_i} + c_{ij}^{~~k}(x)x^*_k\frac{\partial}{\partial x^*_i}\frac{\partial}{\partial x^*_j}.
\end{equation*}
In this alternative approach, everything derives from first principles, which might make it more elegant than our down-to-earth approach. However, 
we believe that the approach presented here is more constructive in the sense that every definition is justified from physical considerations 
relative to the problem at hand.

We are now almost ready to derive the set of observables of our theory. We only need one more detail, which will be important when 
having to show that this set of observables is independent of the chosen volume form $\Omega$. Any well-behaved volume form $\Omega$ 
is still linked to the canonical volume form $\Omega_0$ by a conformal factor:
\begin{equation*}
 \Omega = e^f\Omega_0.
\end{equation*}
From the Definition \eqref{def_delta} we see that the quantity that we have to compute to get $\Delta_{\Omega}$ is
\begin{equation*}
 d(\alpha\lrcorner\Omega) = d(e^{f}\alpha\lrcorner\Omega_0)=e^{f}\left(d(\alpha\lrcorner\Omega_0) + df\wedge(\alpha\lrcorner\Omega_0)\right).
\end{equation*}
Since $e^{f}d(\alpha\lrcorner\Omega_0)=(\Delta\alpha)\lrcorner\Omega$ while 
$e^fdf\wedge(\alpha\lrcorner\Omega_0)=\{f,\alpha\}\lrcorner\Omega$, we have
\begin{equation} \label{relation_delta}
 \Delta_{\Omega} = \Delta + \{f,-\}.
\end{equation}

 \subsection{Quantum master equations}

 In order to derive the quantum master equations, let us consider the canonical volume form on the finite-dimensional space $M$. 
 Recall that a quantity is gauge invariant if, and only 
if, it is $\Delta$-closed. The very first thing to do is to check when the zero-point function 
\begin{equation*}
 \int_{M}e^{iS/\hbar}\Omega = \int_{\underline{0}}^{BV}e^{iS/\hbar}.
\end{equation*}
is gauge-invariant. This is therefore equivalent to solving the equation
\begin{equation}
 \Delta \left(e^{i\frac{S}{\hbar}}\right) = 0
\end{equation}
for $S$. This can be simplified by expanding the exponential, using the fact that $\Delta$ is a derivation and showing by induction
that $\Delta S^n = nS^{n-1}\Delta S + \frac{n(n-1)}{2}S^{n-2}\{S,S\}$. After some work we reach a very nice equation, which is 
generally called \emph{the} quantum master equation:
\begin{equation} \label{QME}
 \{S,S\} - i\hbar\Delta S=0.
\end{equation}
The classical master equation obtained from the quantum master equation by taking the limit $\hbar\rightarrow0$ reads $\{S,S\}=0$. 
This is actually an important equation, but we will not discuss it further.

We have already stated that any solution of this equation defines an action such that the zero-point function of the 
theory arising from this action is gauge invariant. Then, one can study the observables of the theory, and so on... In other words: the 
solution of this equation gives all the possible theories with a given gauge symmetry. This is why the study of the solutions of this 
Quantum Master Equation is still nowadays an active domain of research, see e.g. \cite{IgItSo07}.

Before deriving the observables of a theory, an essential consistency check of this construction is to make sure that the usual BRST action (when 
written in the language of the BV formalism) 
defines a gauge-invariant zero point function. This will be stated in a forthcoming theorem, but 
for which we need a few definitions.

Let $S_0$ be an action functional over $X$ invariant under a gauge group $\mathfrak{g}$ (let us recall that this means that 
$\mathfrak{g}$ is represented on the vector fields of $X$), that is, for any generator $e_{\alpha}$ 
\begin{equation*}
 \rho(e_\alpha)S_0=0,
\end{equation*}
where $\rho:\mathfrak{g}\mapsto \Gamma(X,TX)$ is the representation function of $\mathfrak{g}$. Then the BV action is
\begin{equation}
 S=S_0+S_E+S_R=S_0+\underset{S_E}{\underbrace{c^\alpha c^\beta C_{\alpha\beta}^\gamma c^*_\gamma}} + \underset{S_R}{\underbrace{\rho_\alpha^i c^\alpha x_i^*}}.
\end{equation}
One can view $S$ as a perturbation of $S_0$ in $\Pi T^*M$\footnote{someone working in homology theory may ponder about this terminology}. 

To state the link between this action and the BRST formalism, let us say that $S_E+S_R$ is the symbol of the BRST differential (seen as a 
vector field) in $\Pi T^*M$. The symbol of a vector field $Q$ on a smooth manifold $V$ of cotangent bundle $\pi:T^*V\mapsto V$
is defined as a function $\sigma(Q)\in T^*V$ such that the Hamiltonian vector field $\{\sigma(Q),.\}$ is a vector field on $T^*V$ whose 
horizontal part is just $Q$. In other words $\forall f\in C^\infty(V),\{\sigma(Q),\pi^*f\}=Qf$.

\begin{thm} \label{to_be_proved}
Let $S=S_0+S_E+S_R$ the action defined above and $\Delta$ the BV Laplacian corresponding to the canonical volume form $\Omega_0$, if
\begin{itemize}
\item[$\bullet$] the Lie algebra $\mathfrak{g}$ acts on $X$ in such a way that it preserves the measure $\Omega_0$
\item[$\bullet$] the Lie algebra $\mathfrak{g}$ is unimodular,
\end{itemize}
then $S$ is a solution of the Quantum Master Equation.
\end{thm}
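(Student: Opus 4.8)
The strategy is to verify the Quantum Master Equation $\{S,S\}-i\hbar\Delta S=0$ by computing its two pieces separately and showing they vanish independently under the two hypotheses. Since $\Delta$ is a second-order operator that decreases degree by one, $\Delta S$ is a function, and since $\{S,S\}$ is the self-bracket of an odd-degree-shifted object it is also a function; it is natural to try to kill $\{S,S\}$ using the classical master equation (the $\hbar\to 0$ part) and $\Delta S$ using the unimodularity/measure-preservation assumptions, rather than relying on a cancellation between the two.

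\medskip

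\noindent First I would expand $S=S_0+S_E+S_R$ and compute $\{S,S\}$ bilinearly. The term $\{S_0,S_0\}$ vanishes because $S_0$ depends only on the $\phi^i$ (no antifield dependence), so $\partial S_0/\partial x^*_i=0$. The cross term $\{S_0,S_R\}$ reproduces $\rho_\alpha^i c^\alpha \,\partial S_0/\partial\phi^i = c^\alpha\,\rho(e_\alpha)S_0$, which vanishes by the gauge invariance $\rho(e_\alpha)S_0=0$. The remaining pieces $\{S_0,S_E\}$, $\{S_E,S_E\}$, $\{S_E,S_R\}$, $\{S_R,S_R\}$ involve only the ghost sector and the $\phi^*_i$; collecting them should produce exactly the Jacobi identity for the structure constants $C_{\alpha\beta}^\gamma$ together with the fact that $\rho$ is a Lie-algebra homomorphism (so that $[\rho(e_\alpha),\rho(e_\beta)]=C_{\alpha\beta}^\gamma\rho(e_\gamma)$ as vector fields on $X$). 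This is the standard computation that the classical BRST charge squares to zero, repackaged via the Schouten--Nijenhuis bracket; so $\{S,S\}=0$.

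\medskip

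\noindent Next I would compute $\Delta S$ using the coordinate form \eqref{BV_lapl_coord}, $\Delta=\frac{\partial}{\partial\phi^i}\frac{\partial}{\partial\phi^*_i}-\frac{\partial}{\partial c^\alpha}\frac{\partial}{\partial c^*_\alpha}$. The piece $S_0$ contributes nothing (no antifields). Applying the first term of $\Delta$ to $S_R=\rho_\alpha^i c^\alpha \phi^*_i$ gives $\frac{\partial}{\partial\phi^i}(\rho_\alpha^i c^\alpha)=c^\alpha\,\partial_i\rho_\alpha^i$, which is the divergence of the generating vector fields with respect to $\Omega_0$; this vanishes precisely because $\mathfrak{g}$ preserves the measure $\Omega_0$ (i.e. $\mathrm{div}_{\Omega_0}\rho(e_\alpha)=0$). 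Applying the second term of $\Delta$ to $S_E=c^\alpha c^\beta C_{\alpha\beta}^\gamma c^*_\gamma$ gives a term proportional to $C_{\alpha\gamma}^\alpha c^\gamma$ (the trace of the adjoint action), which vanishes because $\mathfrak{g}$ is unimodular. The only subtlety here is bookkeeping the Koszul signs coming from the parities $|c^\alpha|=+1$, $|\phi^*_i|=-1$, $|c^*_\alpha|=-2$, and checking that no other cross contributions survive — $S_R$ involves $\phi^*_i$ but not $c^*_\alpha$, and $S_E$ involves $c^*_\gamma$ but not $\phi^*_i$, so the two terms of $\Delta$ act on disjoint pieces. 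Hence $\Delta S=0$, and both terms of the QME vanish.

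\medskip

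\noindent The main obstacle I anticipate is purely the sign combinatorics in the Schouten--Nijenhuis bracket: with the left/right derivative conventions $\frac{\derG\partial F}{\partial y}=(-1)^{|y|(|F|+1)}\frac{\partial F}{\partial y}$ and the mixed bosonic/fermionic statistics of $\phi^*_i$, $c^\alpha$, $c^*_\alpha$, getting $\{S_E,S_E\}$ and $\{S_E,S_R\}$ to combine into the honest Jacobi identity (rather than into it up to an unwanted sign) requires care. I would handle this by checking the lowest nontrivial case (abelian $\mathfrak{g}$, where $S_E=0$) first as a sanity check, then the general semisimple case, using that $\{S,\cdot\}$ is a graded derivation (stated in the text) to reduce the number of independent sign computations. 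The measure-preservation and unimodularity inputs enter only at the very end and are exactly the two obstructions that a naive BRST treatment would miss, which is the conceptual point of the theorem.
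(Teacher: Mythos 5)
Your plan is correct and is exactly the standard argument; the paper itself omits the proof (deferring to \cite{Cl15}), but its remark that in fact $\Delta S=\{S,S\}=0$ separately confirms that your strategy of killing the two terms of the QME independently, rather than through a cancellation, is the intended one. Your accounting of where each hypothesis enters is also right: gauge invariance, the representation property $[\rho(e_\alpha),\rho(e_\beta)]=C_{\alpha\beta}^\gamma\rho(e_\gamma)$ and the Jacobi identity give $\{S,S\}=0$, while $\Delta S_R\propto c^\alpha\,\partial_i\rho^i_\alpha$ vanishes by preservation of $\Omega_0$ and $\Delta S_E\propto C^\alpha_{\alpha\gamma}c^\gamma$ vanishes by unimodularity, the two terms of $\Delta$ acting on disjoint sectors as you note. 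The only work left is the sign bookkeeping you already flag, which does not affect the structure of the argument.
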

The proof of this theorem is quite long and far beyond the scope of this introduction. It can be found in many places in the literature, 
and in particular in \cite{Cl15}. However, interestingly, we have $\Delta S=\{S,S\}=0$. Hence $S$ is a solution of the 
classical master equation as well as the quantum master equation: we say that $S$ is non-anomalous.

Indeed, the classical and quantum master equations encode, in a very deep sense, the symmetries of the theory. Hence, the classical and the 
quantum version of the theory have the same symmetries, which is exactly the definition of what non-anomalous mean. For completeness, let 
us point out that in the BV formalism, there is a weaker version of non-anomalous, which is just that we can find a completion (order by order 
in $\hbar$) of a solution of the classical master equation such that this completion fulfills the quantum master equation.

\subsection{Observables in BV formalism}

Assuming that we have found an action $S$ solution of the Quantum Master Equation, what are the observables of the theory? First of all, 
they have to be gauge invariant quantities so, according to the previous explanations
$\int^{BV}\mathcal{F}e^{iS/\hbar}$ has to be well-defined, that is
\begin{equation*}
 \Delta\left(\mathcal{F}e^{iS/\hbar}\right) = 0.
\end{equation*}
As for the Quantum Master equation, expanding the exponential we eventually found that this amounts to $\mathcal{F}$ being 
a solution of
\begin{equation} \label{qme_obs}
 \{S,\mathcal{F}\} - i\hbar\Delta\mathcal{F} = 0.
\end{equation}
This equation is the other quantum master equation, and justifies the plural form in the title of this section. Now, observables will be 
equivalent classes of solutions of this equation. Indeed, if 
$\Delta\left(\mathcal{F}e^{iS/\hbar}\right) = \Delta\left(\mathcal{F'}e^{iS/\hbar}\right) = 0$ and if 
$(\mathcal{F}-\mathcal{F}')e^{iS/\hbar}=\Delta\mathcal{G}$ then for any Lagrangian submanifold $\Sigma$
\begin{equation*}
 \int^{BV}_{\Pi N^*\Sigma}(\mathcal{F}-\mathcal{F}')e^{iS/\hbar} = 0,
\end{equation*}
i.e. $\mathcal{F}$ and $\mathcal{F}'$ lead to the same measured values. Hence, we see that the observables of the theory are instead elements 
of the zeroth homology group of 
\begin{equation}
 \mathcal{O} = \{S,-\}-i\hbar\Delta.
\end{equation}
Finally, our last task is to check that these are independent of the choice of the volume form $\Omega$.

So, let us assume that we have chosen an action $S$ solution of the Quantum Master Equation. We want to know what are the observables of 
the theory, deriving them for instance from the volume form
\begin{equation*}
 \Omega = e^{iS/\hbar}\Omega_0.
\end{equation*}
With respect to this volume form we rewrite the integral to be computed
\begin{equation*}
 \int^{BV,\Delta}_{\Pi N^*\Sigma}\mathcal{F}e^{iS/\hbar} = \int^{BV,\Delta_{\Omega}}_{\Pi N^*\Sigma}\mathcal{F} := \int_{\Sigma}\mathcal{F}e^{iS/\hbar}\lrcorner\Omega_0,
\end{equation*}
where we have explicitely written the BV laplacian with which every BV integral is built. Therefore, for the right-hand-side of this 
equation, the gauge invariance condition for $\mathcal{F}$ reduces to $\Delta_{\Omega}\mathcal{F}=0$.


Before we check that this is equivalent to the quantum master equation \eqref{qme_obs}, let us notice that until now, we have made sure that
the two integrals are the same in the finite dimensional case, and we just have to check that this equality is preserved when on the level of  
the quantum master equations. This is motivated by the fact that these 
quantum master equations are the starting point (in the BV approach) for defining observables in quantum field theory.

From the relation \eqref{relation_delta} we see that
\begin{equation}
 \Delta_\Omega\mathcal{F} = 0 \Leftrightarrow \Delta\mathcal{F} + \{iS/\hbar,\mathcal{F}\} = 0 \Leftrightarrow \{S,\mathcal{F}\} - i\hbar\Delta\mathcal{F} = 0.
\end{equation}
Hence the set of observables is independent of the choice of a volume form! Now, we could have taken a more general volume form
\begin{equation*}
 \Omega = e^f\Omega_0,
\end{equation*}
in which case the zero point function would be 
\begin{equation*}
 \int^{BV}_{(\Omega)}e^{-f}e^{iS/\hbar} := \int\mathcal{F}e^{iS/\hbar}\lrcorner\Omega_0 = \int^{BV}_{(\Omega_0)}e^{iS/\hbar}.
\end{equation*}
Then one can check that
\begin{equation*}
 \Delta_{\Omega}\left(e^{-f}e^{iS/\hbar}\right) = 0 \Leftrightarrow \{S,S\} - i\hbar\Delta S = 0.
\end{equation*}
One moreover checks that the set of observables is independent of $f$, that is
\begin{equation*}
 \Delta_{\Omega}\left(e^{-f}e^{iS/\hbar}\mathcal{F}\right) = 0 \Leftrightarrow \{S,\mathcal{F}\} - i\hbar\Delta\mathcal{F} = 0
\end{equation*}
if $S$ is a solution of the Quantum Master Equation. This is only slightly more cumbersome than the case we treated in more details above.

\section{Conclusion}

We have now presented the basics idea behind our approach of the BV formalism. Many technical details were left behind the curtain, but we 
hope that the main philosophy was somehow conveyed to the reader. Let us briefly summarize it here, before saying a word on a few questions that 
are left unanswered in this text.

The BV formalism aims to deal with theories having a gauge symmetry. In such theories, the action is invariant under the action of a group (the 
gauge group). Thus the integrand is constant over non-compact submanifolds of the configuration space, making the path integral 
ill-defined. The BRST solution to this problem is to quotient out the orbits of the gauge group and to declare that the real 
physically relevant configuration space is the quotiented space.

The BV approach is quite different. The idea is to work in some extended configuration space, that is on the shifted cotangent space of the 
initial configuration space, and to view the path integral (in the finite dimensional case) as an integral over some lagrangian 
submanifold of this extended space. One can change the integration domain without changing the value of the integral if the integrand 
has a nice property (which is just gauge invariance) and if the new integration domain is in the same homology class than the former.

The possible gauge-invariant actions and the possible gauge-invariant observables of the theory defined by 
such an action are shown to be independent of the choice of a volume form in the finite dimensional case. The remarkable point is that 
formulas derived in this way admit a natural generalization when the configuration space is infinite dimensional, which is of course the case 
of interest in quantum field theory.

These infinite dimensional generalizations can be transposed to be the definition of the observables in quantum field theory. This is the same 
philosophy as for the BRST approach of quantum field theory: in the finite dimensional case, one shows that the observables are the 
elements of a certain cohomology group. This cohomology group is still well-defined in the infinite dimensional case, and we take it to be 
the definition of the observables in quantum field theory.

Now, a reader may ask: what happens in practice when moving to the infinite dimensional case? Well, first, the coordinates $x^i$ will become 
fields, so that any derivative (for example in the BV laplacian) will become a functional derivative. Whereas in the finite dimensional case,
repeated indices meant a summation over the basis elements of the configuration space, in the infinite dimensional case, this has to be 
understood as an integral over the space-time one is dealing with as well as a discrete summation over the gauge group generators in the 
case of $c^{\alpha}c^*_{\alpha}$. Then other usual features of quantum field theories can be treated with these objects. For example, the 
issue of renormalization has been dealt with since the beginning of BV formalism, but a modern point of view on the subject can be 
found in \cite{Costello}.

We have given above some arguments in favor of the BV formalism, namely that it allows to treat more general theories than the 
Faddeev--Popov and BRST formalisms, and that it is more adapted to some questions, e.g. the study of anomalies in QFT. Moreover,
we have recently observed a revival of the BV formalism, which has found applications in various areas.

In particular, it has been used (together with its hamiltonian formulation: the Batalin--Fradkin--Vilkovisky (BFV) formalism) to 
compute the Chern--Simons invariants of manifolds with boundaries. The article \cite{CaWeMn15} of the present volume gives a 
detailled account of this line of research. Moreover, it has also recently been argued in \cite{BrFrRe13} that the BV formalism written in the 
language of category theory might allow to work out perturbative quantum gravity as a perturbative quantum field theory. However, we 
would like to finish this text stressing the intrinsic elegance of the BV theory, which is one of the main reasons that motivated 
our interest for it.

\paragraph{Acknowledgements:}

We would like to thank the organizers of this very interesting and pleasant school as well as all the participants and speakers who 
have clearly contributed to the quality of the school. Moreover, the authors would like to thank Christian Brouder and Fr\'ed\'eric 
H\'elein for many enlightning discussions.

\bibliographystyle{unsrt}
\bibliography{thesis}

\begin{thebibliography}{10}

\bibitem{BaVi77}
I.A. Batalin and G.A. Vilkovisky.
\newblock Relativistic {S}-matrix of dynamical systems with boson and fermion
  constraints.
\newblock {\em Physics Letters B}, 69(3):309 -- 312, 1977.

\bibitem{BaVi81b}
I.A. Batalin and G.A. Vilkovisky.
\newblock Gauge algebra and quantization.
\newblock {\em Physics Letters B}, 102(1):27 -- 31, 1981.

\bibitem{Fe48}
Richard~P. Feynman.
\newblock Relativistic {C}ut-{O}ff for {Q}uantum {E}lectrodynamics.
\newblock {\em Phys. Rev.}, 74:1430--1438, Nov 1948.

\bibitem{Ka78}
R.E. Kallosh.
\newblock Modified {F}eynman rules in {S}upergravity.
\newblock {\em Nuclear Physics B}, 141(1):141 -- 152, 1978.

\bibitem{Weinberg96b}
Steven Weinberg.
\newblock {\em The Quantum Theory of Fields}, volume~2.
\newblock Cambridge University Press, 1996.

\bibitem{Fi06}
Jos\'e Figueroa-O'Farrill.
\newblock {BRST} cohomology.
\newblock 2006.

\bibitem{Ta56}
John Tate.
\newblock Homology of {N}oetherian rings and local rings.
\newblock {\em Illinois J. Math.}, 1(1):14 -- 27, 1957.

\bibitem{Cl15}
Pierre~J. Clavier.
\newblock {\em Analytic and geometrical approaches of non-perturbative quantum
  field theories}.
\newblock PhD thesis, Université Pierre et Marie Curie, 2015.

\bibitem{Gw12}
Owen Gwilliam.
\newblock {\em Factorization algebra and free field theories}.
\newblock PhD thesis, 2012.

\bibitem{IgItSo07}
Yuji Igarashi, Katsumi Itoh, and Hidenori Sonoda.
\newblock Quantum {M}aster {E}quation for {QED} in {E}xact {R}enormalization
  {G}roup.
\newblock {\em Prog. Theor. Phys.}, 118:121--134, 2007.

\bibitem{Costello}
Kevin Costello.
\newblock {\em Renormalization and Effective Field Theory}.
\newblock AMS, 2011.

\bibitem{CaWeMn15}
Alberto~S. Cattaneo, Konstantin Wernli, and Pavel Mnev.
\newblock Split {C}hern-{S}imons theory in the {BV}-{BFV} formalism.
\newblock 2016.

\bibitem{BrFrRe13}
Romeo Brunetti, Klaus Fredenhagen, and Katarzyna Rejzner.
\newblock Quantum gravity from the point of view of locally covariant quantum
  field theory.
\newblock 2013.

\end{thebibliography}

\end{document}